\newcommand{\ubar}[1]{\underaccent{\bar}{#1}}
\title{\LARGE \bf Katz Centrality of Markovian Temporal Networks: Analysis and Optimization}
\author{Masaki Ogura and Victor M.~Preciado
\thanks{The authors are with the Department of Electrical and Systems
Engineering, University of Pennsylvania, Philadelphia, PA 19014, USA.
Email:  {\tt\small \{ogura,preciado\}@seas.upenn.edu}}%
\thanks{This work was supported in part by the NSF under grants CNS-1302222 and IIS-1447470.}%
}
\newtheorem{definition}{Definition}[section]
\newtheorem{assumption}[definition]{Assumption}
\newtheorem{lemma}[definition]{Lemma}
\newtheorem{proposition}[definition]{Proposition}
\newtheorem{theorem}[definition]{Theorem}
\newtheorem{problem}[definition]{Problem}
\DeclareMathOperator*{\minimize}{minimize}
\DeclareMathOperator*{\subjectto}{subject\ to}
\DeclareMathOperator*{\col}{col}
\DeclareSymbolFont{bbold}{U}{bbold}{m}{n}
\DeclareSymbolFontAlphabet{\mathbbold}{bbold}
\newcommand{\onev}{\mathbbold{1}}
\DeclareMathAlphabet{\pazocal}{OMS}{zplm}{m}{n}
\renewcommand{\mathcal}[1]{\pazocal{#1}}
\newenvironment{proofof}[1]{
\begin{proof}}{\end{proof}
}
\newcommand{\afterequation}{\vskip 3pt}
\begin{document}

\maketitle
\thispagestyle{empty}
\pagestyle{empty}

\begin{abstract}
Identifying important nodes in complex networks is a fundamental
problem in network analysis. Although a plethora of measures has been
proposed to identify important nodes in static (i.e., time-invariant)
networks, there is a lack of tools in the context of temporal networks
(i.e., networks whose connectivity dynamically changes over time). The
aim of this paper is to propose a system-theoretic approach for
identifying important nodes in temporal networks. In this direction,
we first propose a generalization of the popular Katz centrality
measure to the family of Markovian temporal networks using tools from
the theory of Markov jump linear systems. We then show that Katz
centrality in Markovian temporal networks can be efficiently computed
using linear programming. Finally, we propose a convex program for
optimizing the Katz centrality of a given node by tuning the weights
of the temporal network in a cost-efficient manner. Numerical
simulations illustrate the effectiveness of the obtained results.
\end{abstract}

\section{Introduction}

Identifying key nodes in complex networks is a fundamental problem in,
for example, social network analysis~\cite{Marsden2002}, viral
marketing~\cite{Hinz2011}, and biological networks~\cite{Girvan2002}.
In this direction, a variety of \emph{centrality measures} have been
proposed in the literature to assign importance scores to the nodes in
the network. For example, the PageRank~\cite{Gleich2014}, originally
introduced for ranking web pages~\cite{Page1998}, has found
application in a broad range of areas including chemistry, biology,
and neuroscience~\cite{Gleich2014}. Alternative centrality measures,
such as the Bonacich \cite{Bonacich1987}, Katz \cite{Katz1953}, or
HITS \cite{Kleinberg1999} centralities are also popular in the
analysis of complex networks. In practice, many complex networks of
practical interests present a \emph{time-varying} topology, as
frequently observed in human contact networks, online social networks,
biological, and ecological networks~\cite{Holme2015b}. In this
context, most of the centrality measures proposed for static
topologies are not able to faithfully capture the effect of temporal
variations on the importance of nodes~\cite{Braha2006}.

Although we find in the literature various generalizations of static
centrality measures for temporal networks, such as the
path-based~\cite{Lerman2010a}, betweenness~\cite{Tang2010a},
Katz~\cite{Grindrod2014a}, and PageRank~\cite{Rossi2012} centrality
measures, most of these generalizations are based on heuristic
arguments, without a rigorous mathematical justification. Furthermore,
there is also a lack of tools to optimize  the centrality measures in
the context of temporal networks (see
\cite{Fercoq2013,Csaji2014,Masson2015} for recent results on the
optimization of centrality measures of static networks). In this
paper, we extend the concept of Katz centrality~\cite{Katz1953} for
discrete-time temporal networks. Utilizing the theory of Markov jump
linear systems, we first show that the Katz centrality measure for
Markovian temporal networks is given as the solution of a linear
program. Based on this fact, we then propose an optimization framework
for increasing the centrality of a given node by tuning the weights of
the edges in the temporal network with a minimum cost.

This paper is organized as follows. After introducing necessary
mathematical notations, we define the Katz centrality for Markovian
temporal networks in Section~\ref{sec:Defn} and state the problem
studied in this paper. In Section~\ref{sec:analysis}, we present an
optimization framework for efficiently computing the Katz centrality.
We then give solutions to the problem of optimizing the Katz
centrality of Markovian temporal networks in Section~\ref{sec:opt}. We
illustrate the effectiveness of the proposed optimization approach in
Section~\ref{sec:numerical}.

\subsection{Mathematical Preliminaries}

For a positive integer $n$, define $[n] = \{1, \dotsc, n\}$. We let
$I_n$ and $O_n$ denote the $n\times n$ identity and the zero matrices,
respectively. By $\onev_p$, we denote the $p$-dimensional vectors
whose entries are all ones. A real matrix~$A$ (or a vector as its
special case) is said to be nonnegative, denoted by $A\geq 0$, if $A$
is entry-wise nonnegative. The notations $A\leq 0$, $A> 0$, and $A<0$
are understood in an obvious manner. For another matrix $B$, we write
$A\geq B$ if $A-B\geq 0$. Let $A$ be square. The spectral radius of
$A$ is denoted by $\rho(A)$. We say that $A$ is Hurwitz stable if the
eigenvalues of $A$ have negative real parts. We also say that $A$ is
Metzler if the off-diagonal entries of $A$ are all non-negative. The
Kronecker product~\cite{Horn1990} of two matrices $A$ and $B$ is
denoted by~$A\otimes B$. Given a collection of $n$ matrices
$A_1,\ldots, A_n$, we denote their direct sum by $\bigoplus_{i=1}^n
A_i$. If these matrices have the same number of columns, the matrix
obtained by stacking the matrices in vertical ($A_1$ on top) is
denoted by $\col(A_1, \dotsc, A_n)$.

A (weighted) graph is defined as a triple~$\mathcal G=(\mathcal V,
\mathcal E, w)$, where $\mathcal V = \{1, \dotsc, n\}$ is the set of
nodes, $\mathcal E \subset \mathcal V \times \mathcal V$ is the set of
edges  consisting of distinct and unordered pairs~$\{i, j\}$, and
$w\colon \mathcal E \to (0, \infty)\colon \{ i, j\}\mapsto w_{ij}$ is
the weights of edges. We say that a node $i$ is a neighbor of~$j$ (or
that $i$ and $j$ are adjacent) if $\{i, j\} \in \mathcal E$. The
adjacency matrix~$A\in \mathbb{R}^{n\times n}$ of the graph $\mathcal
G$ is defined as the matrix whose $(i,j)$-th entry is $w_{ij}$ if and
only if nodes $i$ and $j$ are adjacent, $0$ otherwise.
	
Finally, we recall basic facts about a class of optimization problems
called geometric programs~\cite{Boyd2007}. Let $x_1$, $\dotsc$, $x_m$
denote $m$ real positive variables. We say that a real-valued function
$f$ of~$x = (x_1, \dotsc, x_m)$ is a {\it monomial function} if there
exist $c>0$ and $a_1, \dotsc, a_m \in \mathbb{R}$ such that $f(x) = c
{\mathstrut x}_1^{a_1} \dotsm {\mathstrut x}_m^{a_m}$. Also, we say
that $f$ is a {\it posynomial function} if it is a sum of monomial
functions of~$x$. Given posynomial functions $f_0$, $\dotsc$, $f_p$
and monomial functions $g_1$, $\dotsc$, $g_q$, the optimization
problem
\begin{equation}\label{eq:generalGP}
\begin{aligned}
\minimize_x\ 
&
f_0(x)
\\
\subjectto\ 
&
f_i(x)\leq 1,\quad i=1, \dotsc, p, 
\\
&
g_j(x) = 1,\quad j=1, \dotsc, q, 
\end{aligned}
\end{equation}
is called a {\it geometric program}. It is known~\cite{Boyd2007} that
a geometric program can be easily converted into a convex optimization
problem.

\section{Katz Centrality for Temporal Networks}\label{sec:Defn}

In this section, we introduce the Katz centrality measure for
discrete-time temporal networks. We focus our attention to the
tractable case in which the process describing changes in the topology
of the network presents Markovian properties. We then state the
problem of optimizing the Katz centrality measure of a given node in a
Markovian temporal networks by tuning the weights of certain edges. 
Let us first introduce the class of temporal networks studied in this
paper. Let $L$ be a positive integer. For each $\ell \in [L]$, let
$\mathcal G_\ell$ be a weighted graph having nodes~$1, \dotsc, n$. We
call a discrete-time $\{\mathcal G_1, \dotsc, \mathcal G_L\}$-valued
stochastic process $\mathcal G = \{\mathcal G(k)\}_{k\geq 0}$ a
\emph{temporal network}. Each $\mathcal G_\ell$ is called a
\emph{layer} of the temporal network~$\mathcal G$. We say that
$\mathcal G$ is \emph{Markovian} if the stochastic process $\mathcal
G$ is a time-homogeneous Markov chain. We say that $\mathcal G$ is
\emph{i.i.d.} if the random variables $\mathcal G(k)$ ($k=0, 1,
\dotsc$) are independent and identically distributed.

We assume that a temporal network $\mathcal G$ is Markovian throughout
this paper. It is remarked that the class of Markovian temporal
networks includes several mathematical models of temporal networks,
including the edge-swapping model~\cite{Volz2009}, the activity-driven
model~\cite{Perra2012}, and the aggregated Markovian edge-independent
model~\cite{Ogura2015c}. We also note that the optimal intervention to
the spreading processes over continuous-time Markovian temporal
networks is studied in~\cite{Ogura2015a}. In order to motivate our
definition of Katz centrality for Markovian temporal networks, we here
recall the definition of the Katz centrality for static
networks~\cite{Katz1953}:

\begin{definition}[\cite{Katz1953}]\label{def:katz}
Let $\mathcal G$ be a weighted graph having $n$ nodes and
adjacency matrix~$A$. Let $\alpha < 1/\rho(A)$ be an arbitrary positive  parameter. Then, the
Katz centrality of $\mathcal G$ is defined as
\begin{equation}\label{eq:def:k}
v = (I-\alpha A)^{-1} \onev_n. 
\end{equation}
\afterequation
\end{definition}

Since this definition of the Katz centrality does not allow networks
to be time-varying, in this paper we utilize the following alternative
formulation of the Katz centrality. Let us consider the linear
time-invariant autonomous system
\begin{equation*}
x(k+1) = \alpha Ax(k),\ k\geq 0, 
\end{equation*}
with the initial condition $x(0) = \onev_n$. Since $x(k) = \alpha^k
A^k \onev_n$ for every $k\geq 0$, we can see that
\begin{equation*}
v = \sum_{k=0}^\infty (\alpha A)^k \onev_n = \sum_{k=0}^\infty x(k)
\end{equation*}
because $\alpha < 1/\rho(A)$ guarantees the convergence of the power
series. Based on this alternative expression, we can naturally
introduce the Katz centrality of Markovian temporal networks as
follows:

\begin{definition}\label{def:katz:markov}
Let $\mathcal G = \{\mathcal G(k)\}_{k\geq 0}$ be a Markovian temporal
network. Let $A(k)$ denote the adjacency matrix of the graph $\mathcal
G(k)$. Let $x$ be the solution of the discrete-time difference
equation
\begin{equation}\label{eq:myMJLS}
x(k+1) = \alpha A(k)x(k), \ x(0) = \onev_n, 
\end{equation}
where $\alpha$ is a positive constant. We define the \emph{Katz
centrality} of $\mathcal G$ as the vector 
\begin{equation}\label{eq:def:k(mathcalG)}
v = \sum_{k=0}^{\infty} E[x(k)]. 
\end{equation}
\afterequation
\end{definition}

We remark that the convergence of the power
series~\eqref{eq:def:k(mathcalG)} is not necessarily guaranteed for
all values of $\alpha$. We discuss the admissible range of $\alpha$ in
Section~\ref{sec:analysis}, where we also give an efficient method for
computing the Katz centrality using convex optimizations.

One of our main objectives in this paper is to optimize the Katz
centrality of a given node in a temporal networks by tuning edge
weights. As described below, we assume that tuning these weights has
an associated cost and our objective is to minimize the total tuning
cost. More formally, let $\mathcal G$ be a Markovian temporal network.
For each $\ell \in [L]$ and $\{i, j\} \in \mathcal E_\ell$, we let
$f_{\ell, ij}:[0, \infty) \to [0, \infty)$ be a function. For a
nonnegative scalar~$\Delta_{\ell, ij}$, the quantity~$f_{\ell,
ij}(\Delta_{\ell, ij})$ represents the cost for changing the weight of
the edge $\{i, j\}$ in $\mathcal G_\ell$ from $a_{\ell, ij}$ to
$a_{\ell, ij} + \Delta_{\ell, ij}$. If we define the
matrix~$\Delta_\ell = [\Delta_{\ell, ij}]_{i, j}$, then the sum
$\sum_{\{i, j\}\in\mathcal E_\ell} f_{\ell, ij}(\Delta_{\ell, ij})$
represents the cost for changing the adjacency matrix of the $\ell$th
layer from $A_\ell$ to $A_\ell + \Delta_\ell$. Let us denote the
weighted graphs having the resulting adjacency matrices by~$\mathcal
G'_1$, $\dotsc$,~$\mathcal G'_L$, and denote the resulting Markovian
temporal network by $\mathcal G'$. We notice that we do not consider
the design of the transition probabilities of the Markovian temporal
networks, which is indeed an important problem. We finally define
\begin{equation}\label{eq:def:C}
C(\Delta) =
\sum_{\ell=1}^L \sum_{\{i. j\} \in \mathcal E_\ell} f_{\ell,
ij}(\Delta_{\ell,ij}), 
\end{equation}
which represents the total cost for changing the temporal networks
from $\mathcal G$ to $\mathcal G'$.

Under the above notations, we consider the following optimization
problem in Section~\ref{sec:opt}:

\begin{problem}\label{prb:optimalCost}
Let $v'$ be the Katz centrality of the Markovian temporal
network~$\mathcal G'$. For each $\ell \in [L]$, let $\bar \Delta_\ell$
be an $n\times n$, symmetric, nonnegative matrix, representing the
maximum allowable change of the weights of the $\ell$th layer. Let $i
\in \{1, \dotsc, n\}$ and $\epsilon\geq 0$ be arbitrary. Find
symmetric and nonnegative matrices $\Delta_1$, $\dotsc$, $\Delta_\ell
\in \mathbb{R}^n\times n$ such that
\begin{equation}\label{eq:Deltaell.leq...}
\Delta_{\ell}\leq \bar \Delta_\ell
\end{equation}
for every $\ell \in [L]$, 
\begin{equation}\label{eq:rank constraints}
v'_i\geq (1+\epsilon)v'_j
\end{equation}
for every $j\neq i$, and $C(\Delta)$ is minimized.
\end{problem}

\section{Analysis} \label{sec:analysis}

The aim of this section is to present efficient methods for computing
the Katz centrality of Markovian temporal networks. Throughout this
paper, we let $P \in \mathbb{R}^{L\times L}$ denote the transition
probability matrix of the Markovian temporal network~$\mathcal G$. We
first prove the following proposition, which gives a closed-form
expression of the Katz centrality similar to \eqref{eq:def:k} for the
static case:

\begin{proposition}\label{prop:Katz}
Define the $(nL)\times (nL)$ matrix
\begin{equation}\label{eq:def:mathcalA}
\mathcal A = (P^\top \otimes I_n) \bigoplus_{\ell=1}^L A_\ell, 
\end{equation}
where $A_\ell$ is the adjacency matrix of the graph $\mathcal G_\ell$
for all $\ell \in [L]$. Then, the Katz centrality of $\mathcal G$ exists if
and only if $\alpha < 1/\rho(\mathcal A)$, under which we have
\begin{equation*}
v 
= 
(\onev_L^\top \otimes I_n) 
(I_{nL} - \alpha \mathcal A)^{-1} 
(\zeta_0 \otimes \onev_{n}), 
\end{equation*}
where the vector $\zeta_0\in \mathbb{R}^L$ is defined by 
\begin{equation*}
(\zeta_0)_\ell = P(\mathcal G(0) = G_\ell)
\end{equation*}
for each $\ell \in [L]$. 
\end{proposition}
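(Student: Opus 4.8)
The natural strategy is to recast \eqref{eq:myMJLS} as a Markov jump linear system and track the \emph{mode-conditioned} first moments. For each mode $\ell \in [L]$ and each time $k$, I would define the vector $y_\ell(k) = E[x(k)\,\mathbf{1}_{\{\mathcal G(k) = \mathcal G_\ell\}}] \in \mathbb{R}^n$ and stack these into $y(k) = \col(y_1(k), \dotsc, y_L(k)) \in \mathbb{R}^{nL}$. The immediate payoff is that, by the law of total expectation, $E[x(k)] = \sum_{\ell=1}^L y_\ell(k) = (\onev_L^\top \otimes I_n)\,y(k)$, so the target sum $v$ is simply the image of $\sum_k y(k)$ under $\onev_L^\top \otimes I_n$.

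The core computation is to show that $y$ obeys the linear recursion $y(k+1) = \alpha \mathcal A\, y(k)$. Writing $A(k) = \sum_\ell A_\ell\,\mathbf{1}_{\{\mathcal G(k)=\mathcal G_\ell\}}$ and substituting into \eqref{eq:myMJLS}, I would evaluate $y_m(k+1)$ by conditioning on the natural filtration $\mathcal F_k = \sigma(\mathcal G(0), \dotsc, \mathcal G(k))$. Since $x(k)$ is $\mathcal F_k$-measurable, the tower rule together with the time-homogeneous Markov property, giving $E[\mathbf{1}_{\{\mathcal G(k+1) = \mathcal G_m\}} \mid \mathcal F_k] = P_{\ell m}$ on the event $\{\mathcal G(k) = \mathcal G_\ell\}$, yields $y_m(k+1) = \alpha \sum_{\ell=1}^L P_{\ell m} A_\ell\, y_\ell(k)$. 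The only remaining bookkeeping is to verify that the block matrix whose $(m,\ell)$ block equals $P_{\ell m}A_\ell$ is exactly $\mathcal A = (P^\top\otimes I_n)\bigoplus_{\ell} A_\ell$: the factor $P^\top\otimes I_n$ contributes the scalar $(P^\top)_{m\ell} = P_{\ell m}$ to the $(m,\ell)$ block, while the block-diagonal factor $\bigoplus_\ell A_\ell$ contributes $A_\ell$. The initial condition is $y_\ell(0) = E[\onev_n\,\mathbf{1}_{\{\mathcal G(0)=\mathcal G_\ell\}}] = (\zeta_0)_\ell\,\onev_n$, i.e.\ $y(0) = \zeta_0\otimes\onev_n$.

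Unrolling the recursion gives $y(k) = (\alpha\mathcal A)^k(\zeta_0\otimes\onev_n)$, hence $v = (\onev_L^\top\otimes I_n)\sum_{k=0}^\infty(\alpha\mathcal A)^k(\zeta_0\otimes\onev_n)$. For the \emph{if} direction I would invoke the Neumann series: when $\alpha < 1/\rho(\mathcal A)$ we have $\rho(\alpha\mathcal A)<1$, so $\sum_k(\alpha\mathcal A)^k$ converges to $(I_{nL}-\alpha\mathcal A)^{-1}$ and the claimed closed form follows at once.

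I expect the delicate part to be the \emph{only if} direction. The key structural fact is that $\mathcal A \geq 0$ and $\zeta_0\otimes\onev_n \geq 0$, so every $y(k) \geq 0$ and the partial sums of $\sum_k E[x(k)]$ are entrywise nondecreasing; convergence is therefore equivalent to boundedness, and by Tonelli it suffices to control the scalar series $\sum_k \onev_{nL}^\top(\alpha\mathcal A)^k(\zeta_0\otimes\onev_n)$. The main obstacle lies precisely here: one must rule out that this series converges while $\rho(\alpha\mathcal A)\geq 1$, which could occur if the support of $\zeta_0\otimes\onev_n$ fails to excite the Perron eigenspace of a reducible $\mathcal A$. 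Nonnegativity together with a Perron--Frobenius argument should close this gap, though it may implicitly require that all modes are reachable (e.g.\ $\zeta_0 > 0$ or $P$ irreducible); this is the step I would scrutinize most carefully.
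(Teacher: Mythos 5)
Your proposal is correct and takes essentially the same route as the paper: your mode-conditioned vectors $y_\ell(k)=E[x(k)\,\mathbf{1}_{\{\mathcal G(k)=\mathcal G_\ell\}}]$ are exactly the blocks of the lifted vector $E[\zeta(k)\otimes x(k)]$ that the paper works with, and the recursion $y(k+1)=\alpha\mathcal A\,y(k)$ you derive by conditioning on the filtration is precisely the paper's key Lemma, which the paper does not prove but quotes from prior work (Proposition~3.8 of the Ogura--Martin reference); the remaining steps (unrolling, Neumann series, left-multiplication by $\onev_L^\top\otimes I_n$ using $\onev_L^\top\zeta(k)=1$) coincide. So your write-up is in effect a self-contained version of the paper's proof, which is a modest plus. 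The one substantive point is the ``only if'' direction, and your suspicion there is well founded: convergence of $\sum_k(\alpha\mathcal A)^k(\zeta_0\otimes\onev_n)$ does \emph{not} force $\alpha\rho(\mathcal A)<1$ in general. For instance, with $P=I_2$, $\zeta_0=(1,0)^\top$, $A_1=O_n$, and $A_2$ arbitrary, the network starts and remains in layer~$1$, so $v$ exists for every $\alpha>0$, while $\rho(\mathcal A)=\rho(A_2)$ can be made as large as desired; an extra hypothesis such as $\zeta_0>0$ (after which pairing the series with a nonnegative left Perron eigenvector $u$ of $\mathcal A$ gives $\sum_k(\alpha\rho(\mathcal A))^k\,u^\top(\zeta_0\otimes\onev_n)<\infty$, hence $\alpha\rho(\mathcal A)<1$) is genuinely needed. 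Note, however, that the paper's own proof is silent on this direction---it only establishes convergence and the closed form under $\alpha<1/\rho(\mathcal A)$---so your hedge is not a defect relative to the paper; if anything, it pinpoints a gap in the proposition as stated.
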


For the proof of this theorem, we recall the following 
fact from the theory of Markov jump linear systems: 

\begin{lemma}[{\cite[Proposition~3.8]{Ogura2013f}}]
For each $\ell \in [L]$ and $k\geq 0$, define the $\{0, 1\}$-valued
random variable $\zeta(k)_\ell$ by
\begin{equation*}
\zeta(k)_\ell  = \begin{cases}
1, &\text{if }\mathcal G(k) = \mathcal G_\ell, 
\\
0, & \text{otherwise}. 
\end{cases}
\end{equation*}
Let $\zeta(k) = \col(\zeta(k)_1, \dotsc, \zeta(k)_L)$, and $x(k)$ be
the solution of the difference equation~\eqref{eq:myMJLS}. Then, it
holds that
\begin{equation}\label{eq:lifted}
E[\zeta(k+1)\otimes x(k+1)] = \alpha \mathcal AE[\zeta(k)\otimes x(k)]
\end{equation}
for every $k\geq 0$.
\end{lemma}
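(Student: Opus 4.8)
The plan is to verify the identity block-by-block, partitioning each $(nL)$-dimensional vector into $L$ blocks of size $n$, so that the $\ell$th block of $E[\zeta(k)\otimes x(k)]$ is $E[\zeta(k)_\ell\, x(k)]$. The starting point is the observation that, since exactly one of the indicators $\zeta(k)_1, \dotsc, \zeta(k)_L$ equals $1$ at each time, we may write $A(k)x(k) = \sum_{\ell=1}^L \zeta(k)_\ell A_\ell x(k)$, because on the event $\{\mathcal G(k) = \mathcal G_\ell\}$ we have $A(k) = A_\ell$. Substituting into the recursion~\eqref{eq:myMJLS} and multiplying by the scalar $\zeta(k+1)_m$, the $m$th block of $\zeta(k+1)\otimes x(k+1)$ becomes $\alpha \sum_{\ell=1}^L \zeta(k+1)_m \zeta(k)_\ell A_\ell x(k)$.

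First I would take expectations of this expression and evaluate each term $E[\zeta(k+1)_m \zeta(k)_\ell A_\ell x(k)]$ by conditioning on the natural filtration $\mathcal F_k = \sigma(\mathcal G(0), \dotsc, \mathcal G(k))$. The key measurability facts are that $x(k)$ is determined by $\mathcal G(0), \dotsc, \mathcal G(k-1)$ through~\eqref{eq:myMJLS}, and that $\zeta(k)_\ell$ depends only on $\mathcal G(k)$; hence $\zeta(k)_\ell A_\ell x(k)$ is $\mathcal F_k$-measurable and can be pulled out of the conditional expectation. The only remaining random factor, $\zeta(k+1)_m$, is then handled by the tower property, giving $E[\zeta(k+1)_m \zeta(k)_\ell A_\ell x(k)] = E[\,E[\zeta(k+1)_m \mid \mathcal F_k]\,\zeta(k)_\ell A_\ell x(k)]$.

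Second I would invoke the Markov property together with time-homogeneity to compute the inner conditional expectation. Since $E[\zeta(k+1)_m \mid \mathcal F_k] = P(\mathcal G(k+1) = \mathcal G_m \mid \mathcal G(k))$, and on the event $\{\zeta(k)_\ell = 1\}$ this probability equals the transition probability $P_{\ell m}$, I obtain $E[\zeta(k+1)_m \mid \mathcal F_k]\,\zeta(k)_\ell = P_{\ell m}\,\zeta(k)_\ell$. Collecting terms then shows that the $m$th block of $E[\zeta(k+1)\otimes x(k+1)]$ equals $\alpha \sum_{\ell=1}^L P_{\ell m} A_\ell\, E[\zeta(k)_\ell x(k)]$.

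Finally I would match this against the right-hand side by unwinding the block structure of $\mathcal A = (P^\top \otimes I_n)\bigoplus_{\ell=1}^L A_\ell$. Acting by the direct sum $\bigoplus_\ell A_\ell$ sends the $\ell$th block $E[\zeta(k)_\ell x(k)]$ to $A_\ell E[\zeta(k)_\ell x(k)]$; subsequently acting by $P^\top \otimes I_n$, whose $(m,\ell)$ block is $(P^\top)_{m\ell} I_n = P_{\ell m} I_n$, produces the $m$th block $\sum_{\ell=1}^L P_{\ell m} A_\ell E[\zeta(k)_\ell x(k)]$, which is exactly the expression obtained above up to the factor $\alpha$. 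I expect the main obstacle to be the bookkeeping of the transpose in $P^\top$, namely ensuring the index convention $(P^\top)_{m\ell} = P_{\ell m}$ lines up with the direction of the Markov transition used in the conditioning step, rather than any analytic difficulty; the measurability and Markov arguments are routine once the block decomposition is fixed.
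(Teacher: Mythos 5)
Your proof is correct: the block decomposition, the measurability of $\zeta(k)_\ell x(k)$ with respect to $\mathcal F_k$, the Markov/time-homogeneity step giving $E[\zeta(k+1)_m \mid \mathcal F_k]\,\zeta(k)_\ell = p_{\ell m}\,\zeta(k)_\ell$, and the identification $(P^\top)_{m\ell} = p_{\ell m}$ (consistent with the index convention visible in \eqref{eq:smallInequ1}) all check out. The paper itself gives no proof --- the lemma is quoted from \cite{Ogura2013f} --- and your conditioning argument is essentially the standard derivation of this lifting identity from the Markov jump linear systems literature, i.e., the same route as the cited source.
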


Let us prove Proposition~\ref{prop:Katz}: 

\begin{proofof}{Proposition~\ref{prop:Katz}}
From \eqref{eq:lifted}, we see that 
\begin{equation*}
\begin{aligned}
\sum_{k=0}^\infty E[\zeta(k)\otimes x(k)] 
&= 
\sum_{k=0}^\infty (\alpha \mathcal A)^k (\zeta_0 \otimes \onev_{n})
\\ 
&= 
(I_{nL}-\alpha \mathcal A)^{-1}(\zeta_0 \otimes \onev_{n}), 
\end{aligned}
\end{equation*}
where the convergence of the power series is guaranteed by the
assumption on $\alpha$ in Proposition~\ref{prop:Katz}. Since $\onev_L^\top \zeta(k) = 1$ for every
$k\geq 0$ with probability one, multiplying the matrix $\onev_L^\top
\otimes I_n$ from the left to this equation completes the proof of the
theorem.
\end{proofof}

Based on Proposition~\ref{prop:Katz}, we can further show that the
Katz centrality of Markovian temporal networks is given by the optimal
solution of a linear program, which allows an efficient computation of
the centrality for large-scale networks. For this purpose, we need to 
show the following preliminary result:

\begin{proposition}\label{prop:Katz analysis}
Assume that $\alpha < 1/\rho(\mathcal A)$. 
Let $\bar v \in \mathbb{R}^n$ be a positive vector. Then, we have
$v < \bar v$ if and only if there exist positive vectors
$\lambda_1, \dotsc, \lambda_L \in \mathbb{R}^n$ satisfying the
following inequalities:
\begin{align}
\zeta_k \onev_n + \alpha \sum_{\ell=1}^{L}p_{\ell k} A_\ell \lambda_\ell &< \lambda_k,\ \ell \in [L], \label{eq:smallInequ1} 
\\
\sum_{\ell=1}^L \lambda_\ell &< \bar v. \label{eq:smallInequ2}
\end{align}
\afterequation
\end{proposition}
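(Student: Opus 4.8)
The plan is to lift everything to the stacked vector $\lambda = \col(\lambda_1, \dotsc, \lambda_L) \in \mathbb{R}^{nL}$ and to exploit the nonnegativity of $\mathcal A$ together with the stability hypothesis $\alpha < 1/\rho(\mathcal A)$. First I would note that $\mathcal A \geq 0$, since it is assembled from the nonnegative adjacency matrices $A_\ell$ and the nonnegative transition matrix $P$. Hence $\alpha \mathcal A - I_{nL}$ is Metzler, and because $\rho(\mathcal A)$ is itself an eigenvalue of the nonnegative matrix $\mathcal A$ (Perron--Frobenius), the assumption $\alpha < 1/\rho(\mathcal A)$ is exactly the statement that $\alpha \mathcal A - I_{nL}$ is Hurwitz stable. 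In particular $\alpha\rho(\mathcal A) < 1$, so $I_{nL} - \alpha \mathcal A$ is invertible with $(I_{nL} - \alpha \mathcal A)^{-1} = \sum_{k\geq 0}(\alpha\mathcal A)^k \geq 0$. I would then set $w := (I_{nL} - \alpha \mathcal A)^{-1}(\zeta_0 \otimes \onev_n) \geq 0$, so that by Proposition~\ref{prop:Katz} one has $v = (\onev_L^\top \otimes I_n)\, w = \sum_{\ell=1}^L w_\ell$ when $w = \col(w_1, \dotsc, w_L)$. A one-line block computation, using $\mathcal A = (P^\top \otimes I_n)\bigoplus_\ell A_\ell$, shows that the $k$th block of $\mathcal A \lambda$ is $\sum_\ell p_{\ell k} A_\ell \lambda_\ell$; thus \eqref{eq:smallInequ1} is precisely the stacked inequality $(I_{nL} - \alpha \mathcal A)\lambda > \zeta_0 \otimes \onev_n$, while \eqref{eq:smallInequ2} reads $(\onev_L^\top \otimes I_n)\lambda < \bar v$. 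By construction, $w$ saturates the first relation with equality: $(I_{nL} - \alpha\mathcal A)\, w = \zeta_0 \otimes \onev_n$.

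For the \emph{if} direction, suppose positive vectors $\lambda_\ell$ satisfying \eqref{eq:smallInequ1}--\eqref{eq:smallInequ2} exist. Subtracting the equality for $w$ from the stacked form of \eqref{eq:smallInequ1} gives $(I_{nL} - \alpha\mathcal A)(\lambda - w) > 0$, and multiplying on the left by the nonnegative matrix $(I_{nL} - \alpha\mathcal A)^{-1}$ yields $\lambda - w \geq 0$, i.e.\ $w_\ell \leq \lambda_\ell$ for every $\ell$. Summing over $\ell$ and applying \eqref{eq:smallInequ2} then gives $v = \sum_\ell w_\ell \leq \sum_\ell \lambda_\ell < \bar v$, as desired.

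For the \emph{only if} direction, suppose $v < \bar v$. Since $\alpha\mathcal A - I_{nL}$ is Metzler and Hurwitz, the Metzler--Hurwitz characterization supplies a strictly positive $\delta > 0$ with $(\alpha\mathcal A - I_{nL})\delta < 0$, that is $(I_{nL} - \alpha\mathcal A)\delta > 0$. Both strict inequalities survive positive scaling, so I would rescale $\delta$ so small that $(\onev_L^\top \otimes I_n)\delta < \bar v - v$, which is possible because $\bar v - v > 0$. Setting $\lambda := w + \delta$ gives $\lambda > 0$ (as $w \geq 0$ and $\delta > 0$); moreover $(I_{nL}-\alpha\mathcal A)\lambda = (\zeta_0\otimes\onev_n) + (I_{nL}-\alpha\mathcal A)\delta > \zeta_0\otimes\onev_n$ recovers the stacked \eqref{eq:smallInequ1}, and $\sum_\ell \lambda_\ell = v + (\onev_L^\top \otimes I_n)\delta < \bar v$ gives \eqref{eq:smallInequ2}. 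Reading off the blocks of $\lambda$ produces the required $\lambda_1, \dotsc, \lambda_L$.

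The delicate point throughout is the \emph{strictness} of the inequalities. Because the Katz vector $w$ meets \eqref{eq:smallInequ1} with equality, the genuine content of the proposition is the slack created by the strict condition $\alpha < 1/\rho(\mathcal A)$: it is this slack that, through the Metzler--Hurwitz equivalence, furnishes a \emph{strictly} positive $\delta$ in the necessity direction (so that $\lambda = w + \delta$ is strictly positive even where $w$ vanishes), and it is the same condition that, through the nonnegativity of $(I_{nL} - \alpha\mathcal A)^{-1}$, lets me convert a differential inequality back into the comparison $\lambda \geq w$ in the sufficiency direction. I expect these two facts --- the Metzler--Hurwitz existence of a positive $\delta$ and the nonnegativity of the inverse, both standard consequences of $\mathcal A \geq 0$ and $\alpha\rho(\mathcal A) < 1$ --- to be the only nontrivial inputs, with the remaining steps being the routine block identifications indicated above.
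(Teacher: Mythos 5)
Your proof is correct and follows essentially the same route as the paper: both arguments lift to the stacked vector $\lambda=\col(\lambda_1,\dotsc,\lambda_L)$, obtain necessity by perturbing $w=(I_{nL}-\alpha\mathcal A)^{-1}(\zeta_0\otimes\onev_n)$ with a strictly positive vector $\delta$ satisfying $(\alpha\mathcal A-I_{nL})\delta<0$ (the paper's $\epsilon z$), and obtain sufficiency by multiplying the stacked inequality by the nonnegative inverse $(I_{nL}-\alpha\mathcal A)^{-1}$. The only (cosmetic) difference is that in the sufficiency direction you settle for the weak comparison $\lambda\geq w$ and let the strictness come from \eqref{eq:smallInequ2}, whereas the paper additionally invokes a no-zero-row argument to keep the inequality strict --- a step your version shows is unnecessary.
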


\begin{proof}
Assume that $v < \bar v$. Notice that the matrix $\alpha \mathcal A -
I_{nL}$ is Metzler and, furthermore, Hurwitz stable because $\alpha
\in [0, 1/\lambda_{\max}(\mathcal A))$. We can, therefore,
take~\cite{Farina2000} a positive vector~$z \in \mathbb{R}^{nL}$ such
that $ (\alpha \mathcal A - I_{nL}) z< 0$. Since $v < \bar v$, we can take
$\epsilon>0$ such that
\begin{equation}\label{eq:k+eps<bar k}
v + \epsilon(\onev^\top_L \otimes I_n) z < \bar v.
\end{equation}
Define 
\begin{equation}\label{eq:def:lambda}
\lambda = (I_{nL}-\alpha \mathcal A)^{-1}(\zeta_0 \otimes \onev_n) + \epsilon z. 
\end{equation}
We can then show that $(I_{nL}-\alpha \mathcal A)\lambda = (\zeta_0
\otimes \onev_n) + \epsilon(I_{nL}-\alpha \mathcal A) z > \zeta_0
\otimes \onev_n$ and, therefore
\begin{equation}\label{eq:bigIneq1}
\zeta_0 \otimes \onev_n + \alpha \mathcal  A\lambda < \lambda.
\end{equation}
Also, from the definition~\eqref{eq:def:lambda} of $\lambda$, we see that 
\begin{equation}\label{eq:bigIneq2}
(\onev_L^\top \otimes I_n)\lambda 
=v + \epsilon (\onev_L^\top \otimes I_n) z < \bar v
\end{equation}
by \eqref{eq:k+eps<bar k}. Now, define positive vectors $\lambda_1$,
$\dotsc$, $\lambda_L \in \mathbb{R}^n$ by 
\begin{equation}\label{eq:def:lambdalambda}
\lambda = \col(\lambda_1,
\dotsc, \lambda_L)
\end{equation}
Then, it is straightforward to see that the inequalities~\eqref{eq:bigIneq1} and~\eqref{eq:bigIneq2} imply the inequalities~\eqref{eq:smallInequ1} and \eqref{eq:smallInequ2}, respectively.

On the other hand, assume that there exist positive
vectors~$\lambda_1$, $\dotsc$,~$\lambda_L \in \mathbb{R}^n$ satisfying
\eqref{eq:smallInequ1} and \eqref{eq:smallInequ2}. Define $\lambda$
by~\eqref{eq:def:lambdalambda}. We can see that $\lambda$ satisfies
\eqref{eq:bigIneq1} and \eqref{eq:bigIneq2}. From \eqref{eq:bigIneq1},
we have
\begin{equation}\label{eq:bigIneq1:follow}
\zeta_0 \otimes \onev_n < (I_{nL}-\alpha \mathcal A)\lambda.
\end{equation}Since
$\alpha \mathcal A - I_{nL}$ is Hurwitz stable, we have $(\alpha
\mathcal A - I_{nL})^{-1}\leq 0$ (see \cite{Farina2000}). Therefore,
multiplying $(\alpha \mathcal A - I)^{-1}$ to the both hand sides of
\eqref{eq:bigIneq1:follow}, we obtain  $(I_{nL}-\alpha \mathcal
A)^{-1}(\zeta_0 \otimes \onev_n) \leq \lambda$. In fact, since
$(\alpha \mathcal A - I_{nL})^{-1}$ does not have a zero-row and both
$\zeta_0\otimes \onev_n$ and $\lambda$ are positive, the strict
inequality~$(I_{nL}-\alpha \mathcal A)^{-1}(\zeta_0 \otimes \onev_n) <
\lambda$ holds. This inequality and \eqref{eq:bigIneq2} shows $v =
(\onev_L^\top \otimes I_n) (I_{nL}-\alpha \mathcal A)^{-1} (\zeta_0
\otimes \onev_n) \leq (\onev_L^\top \otimes I_n) \lambda < \bar v$, as
desired. This completes the proof of the proposition.
\end{proof}

We now provide a theorem that enables us to find
the Katz centrality of Markovian temporal networks by solving a linear
program:

\begin{theorem}\label{thm:KatzAsLP}
Assume that $\alpha < 1/\rho(\mathcal A)$. Then, the following linear program 
\begin{equation*}
\begin{aligned}
\minimize_{\lambda_\ell, \bar v 	 }\ \  & \sum_{\ell=1}^L \bar v_\ell
\\
\subjectto\ \  & \eqref{eq:smallInequ1} \text{ and } \eqref{eq:smallInequ2}
\end{aligned}
\end{equation*}
is solvable. Moreover, the optimal solution $\bar v^\star$ equals the
Katz centrality of $\mathcal G$.
\end{theorem}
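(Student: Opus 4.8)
The plan is to read off the feasible region of the linear program from Proposition~\ref{prop:Katz analysis} and then squeeze the objective from both sides. First I would note that Proposition~\ref{prop:Katz analysis} says precisely that, for a fixed positive $\bar v$, there exist positive vectors $\lambda_1, \dotsc, \lambda_L$ satisfying \eqref{eq:smallInequ1} and \eqref{eq:smallInequ2} if and only if $v < \bar v$. Consequently the projection of the feasible set onto the $\bar v$-coordinates is exactly the open set $\{\bar v \in \mathbb R^n : v < \bar v\}$ (and every such $\bar v$ is automatically positive, since $v \geq \onev_n > 0$), so the program reduces to minimizing the sum $\onev_n^\top \bar v$ of the entries of $\bar v$ over this set.

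For the lower bound, I would take an arbitrary feasible point $(\lambda_1, \dotsc, \lambda_L, \bar v)$; Proposition~\ref{prop:Katz analysis} forces $v < \bar v$, whence $\onev_n^\top \bar v > \onev_n^\top v$, so $\onev_n^\top v$ bounds the optimal value from below. For the matching upper bound, for each $\delta > 0$ I would take $\bar v = v + \delta \onev_n$; since $\bar v > v$, Proposition~\ref{prop:Katz analysis} again supplies admissible multipliers $\lambda_\ell$, making the point feasible with objective value $\onev_n^\top v + n\delta$. Letting $\delta \downarrow 0$ shows that the infimum equals $\onev_n^\top v$; since the feasible set is nonempty and the objective is bounded below, the program is solvable.

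It then remains to identify the optimal $\bar v$. Because every feasible point satisfies $\bar v_i > v_i$ for each $i$, the nonnegative quantities $\bar v_i - v_i$ sum to $\onev_n^\top \bar v - \onev_n^\top v$; hence, as the objective is driven toward its infimum $\onev_n^\top v$, each coordinate $\bar v_i$ is pinned down to $v_i$, so that $\bar v^\star = v$.

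The main obstacle I anticipate is the strictness of \eqref{eq:smallInequ1}--\eqref{eq:smallInequ2}: the feasible region is open, so $\bar v = v$ is itself infeasible and the infimum $\onev_n^\top v$ is approached but not attained. To make the assertion $\bar v^\star = v$ precise I would either read it as the unique limit of minimizing sequences, or relax the strict inequalities to their non-strict counterparts and show that the feasible region then becomes the closed set $\{\bar v \geq v\}$, so that the minimum is genuinely attained at $\bar v = v$. Verifying this last point is the delicate step, as it requires re-examining the Metzler and Hurwitz-stability argument underlying Proposition~\ref{prop:Katz analysis} with equalities permitted, to ensure the relaxation does not enlarge the feasible region beyond $\{\bar v \geq v\}$.
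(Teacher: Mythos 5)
Your proposal is correct and is essentially the paper's own argument: the paper omits the proof entirely, declaring the theorem ``a direct consequence of Proposition~\ref{prop:Katz analysis},'' and your proof is precisely that consequence spelled out (feasible $\bar v$'s are exactly those with $\bar v > v$, so the infimum of $\onev_n^\top \bar v$ is $\onev_n^\top v$, approached by $\bar v = v + \delta\onev_n$). Your flagged subtlety is real and is glossed over by the paper: since \eqref{eq:smallInequ1}--\eqref{eq:smallInequ2} are strict, the feasible set is open and the infimum is not attained, so ``solvable'' and ``$\bar v^\star = v$'' only hold in your limiting sense or after relaxing to non-strict inequalities --- a relaxation that indeed yields the closed feasible region $\{\bar v \geq v\}$, by the same nonnegativity of $(I_{nL}-\alpha\mathcal A)^{-1}$ already invoked in the proof of Proposition~\ref{prop:Katz analysis}.
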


The proof of this theorem is omitted because it is a direct
consequence of Proposition~\ref{prop:Katz analysis}.

\subsection{I.I.D.~Temporal Networks}

One of the drawbacks of the results presented so far is in the size of
the matrix~$\mathcal A$ (defined in \eqref{eq:def:mathcalA}), whose
size grows linearly with respect to both the number $n$ of the nodes
and the number $L$ of the layers in temporal networks. The latter
dependence is not desirable because $L$ can be significantly large
when the networks exhibit rather complicated dynamics. This subsection
shows that, in the case of i.i.d.~temporal networks, we can avoid the
possible computational complexity and rely on using a matrix whose
size equals always $n$, independent of~$L$.

Let $\mathcal G$ be an i.i.d.~temporal network taking values in the
set of weighted graphs~$\{\mathcal G_1, \dotsc, \mathcal G_L \}$
having nodes $1$, $\dotsc$, $n$. Let $p_\ell = P(\mathcal G(k) =
\mathcal G_\ell)$ for every $\ell \in [L]$. The next proposition
summarizes the computation of the Katz centrality of i.i.d.~temporal
networks:

\begin{proposition}\label{prop:iid}
Define the $n\times n$ matrix
\begin{equation*}
\mathcal A_{\text{iid}} = \sum_{\ell=1}^{L}p_\ell A_\ell, 
\end{equation*}
where $A_\ell$ is the adjacency matrix of $\mathcal G_\ell$ for each
$\ell \in [L]$. Then, the Katz centrality of $\mathcal G$ exists if
and only if $\alpha < 1/\rho(\mathcal A_{\text{iid}})$, under which we have
\begin{equation*}
v = (I_n - \alpha \mathcal A_{\text{iid}})^{-1} \onev_n. 
\end{equation*}
Furthermore, the Katz centrality is given as the solution of the linear program
\begin{equation*}
\begin{aligned}
\minimize_{\bar v}\ \  & \sum_{i=1}^N \bar v_i
\\
\subjectto\ \  & \onev_n + \alpha \mathcal A_{\text{iid}} \bar v < \bar v. 
\end{aligned}
\end{equation*}
\afterequation
\end{proposition}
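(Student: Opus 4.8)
The plan is to show that the i.i.d. case is a special instance of the Markovian case treated in Proposition~\ref{prop:Katz} and Theorem~\ref{thm:KatzAsLP}, obtained by choosing the transition probability matrix so that the chain forgets its current state. Concretely, an i.i.d. temporal network is exactly a Markovian temporal network whose transition matrix $P$ has every row equal to $(p_1, \dotsc, p_L)$; equivalently, $P = \onev_L (p_1, \dotsc, p_L)$, so that $P^\top = (p_1, \dotsc, p_L)^\top \onev_L^\top$ is the rank-one matrix with all columns equal to the stationary distribution. I would verify first that this $P$ is a valid stochastic matrix and that it induces the i.i.d. law $p_\ell = P(\mathcal G(k) = \mathcal G_\ell)$, independently of $k$.

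The next step is to compute $\mathcal A$ from \eqref{eq:def:mathcalA} under this rank-one structure and relate it to $\mathcal A_{\text{iid}}$. With $P^\top = \pi^\top \onev_L^\top$ where $\pi = (p_1, \dotsc, p_L)^\top$, the Kronecker factor becomes $P^\top \otimes I_n = (\pi \onev_L^\top) \otimes I_n = (\pi \otimes I_n)(\onev_L^\top \otimes I_n)$, and composing with the direct sum $\bigoplus_\ell A_\ell$ yields a matrix whose block structure collapses: each of the $L$ block-rows of $\mathcal A$ equals $p_k$ times the single block-row $(A_1, \dotsc, A_L)$. The key algebraic observation I would establish is that $\rho(\mathcal A) = \rho(\mathcal A_{\text{iid}})$, which follows because the nonzero spectrum of the rank-one-in-blocks matrix $\mathcal A$ coincides with the spectrum of the compressed matrix $\mathcal A_{\text{iid}} = \sum_\ell p_\ell A_\ell$. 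This identifies the admissibility threshold $\alpha < 1/\rho(\mathcal A)$ with $\alpha < 1/\rho(\mathcal A_{\text{iid}})$.

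Granting the existence condition, I would then reduce the closed-form expression. Starting from Proposition~\ref{prop:Katz}, note that $\zeta_0 = \pi$ here, and I would track how the resolvent $(I_{nL} - \alpha \mathcal A)^{-1}$ acts on the structured vector $\zeta_0 \otimes \onev_n$ after left-multiplication by $\onev_L^\top \otimes I_n$. The cleanest route is to return to the alternative definition \eqref{eq:def:k(mathcalG)}: because the layers are i.i.d., $E[x(k+1)] = \alpha\, E[A(k)]\, E[x(k)] = \alpha \mathcal A_{\text{iid}} E[x(k)]$ using independence of $A(k)$ from $x(k)$, so $E[x(k)] = (\alpha \mathcal A_{\text{iid}})^k \onev_n$ and summing the geometric series gives $v = (I_n - \alpha \mathcal A_{\text{iid}})^{-1}\onev_n$. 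For the linear program, the reduction is immediate: setting $\bar v_\ell = p_\ell \bar v$ (or equivalently collapsing the $L$ copies of \eqref{eq:smallInequ1} into one) turns the constraints of Theorem~\ref{thm:KatzAsLP} into the single constraint $\onev_n + \alpha \mathcal A_{\text{iid}} \bar v < \bar v$, and the same Metzler/Hurwitz argument as in the proof of Proposition~\ref{prop:Katz analysis} shows the optimum equals $v$.

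\textbf{The main obstacle} I anticipate is the spectral-radius identity $\rho(\mathcal A) = \rho(\mathcal A_{\text{iid}})$ and, relatedly, justifying that the resolvent computation in the $nL$-dimensional lifted space really collapses to the $n$-dimensional one. The independence-based derivation of $E[x(k+1)] = \alpha \mathcal A_{\text{iid}} E[x(k)]$ sidesteps most of this, so I would lead with that argument for both the existence claim and the closed form, and treat the connection to $\mathcal A$ as a consistency check rather than the primary tool. Care is also needed to confirm that $E[A(k) x(k)] = \mathcal A_{\text{iid}} E[x(k)]$ genuinely uses i.i.d.-ness: $x(k)$ is a deterministic function of $A(0), \dotsc, A(k-1)$, which are independent of $A(k)$, so the factorization of the expectation is valid.
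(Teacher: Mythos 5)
Your proposal is correct and, in the argument you yourself say you would lead with, essentially identical to the paper's proof: both use i.i.d.-ness (independence of $A(k)$ from $x(k)$) to get $E[x(k+1)] = \alpha \mathcal A_{\text{iid}} E[x(k)]$, sum the resulting geometric series to obtain existence if and only if $\alpha < 1/\rho(\mathcal A_{\text{iid}})$ together with the closed form $v = (I_n - \alpha \mathcal A_{\text{iid}})^{-1}\onev_n$, and then prove the linear-programming characterization by repeating the arguments of Proposition~\ref{prop:Katz analysis} and Theorem~\ref{thm:KatzAsLP}. The rank-one-$P$ embedding and the spectral identity $\rho(\mathcal A) = \rho(\mathcal A_{\text{iid}})$ are a correct but unnecessary supplement, which you rightly demote to a consistency check.
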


\begin{proof}
Since $\mathcal G$ is i.i.d., we can easily see that $E[x(k+1)] =
\mathcal A_{\text{iid}}E[x(k)]$ for every $k\geq 0$. We therefore have
the formal power series $\bar v = \left(\sum_{k=0}^\infty \alpha^k
\mathcal A_{\text{iid}}^k \right) \onev_n$, which converges to $(I_n -
\alpha \mathcal A_{\text{iid}})^{-1} \onev_n$ if and only if $\alpha
<1/\rho(\mathcal A_{\text{iid}})$. We can prove the latter claim of
the proposition in the same way as the proofs of
Proposition~\ref{prop:Katz analysis} and Theorem~\ref{thm:KatzAsLP}.
\end{proof}

\section{Optimization}\label{sec:opt}

Using the analytical framework presented in the last section, this
section presents a convex optimization-based approach to the problem
of raising the Katz centrality of a given node, as stated in
Problem~\ref{prb:optimalCost}. In this paper, we place the
following reasonable assumption on the cost functions~$f_{\ell,ij}$,
stated as follows:

\begin{assumption}\label{asm:increasing}
The functions $f_{\ell, ij}$ are strictly increasing, continuous, and
satisfies $f_{\ell, ij}(0) = 0$ for all $\ell \in [L]$ and $\{i, j\}
\in \mathcal E_\ell$.
\end{assumption}

Then, instead of solving Problem~\ref{prb:optimalCost} directly, we
consider the following slightly modified problem, where we find the
optimal resource allocation, with a fixed budget, for demoting all the
nodes except the target node~$i$:

\begin{problem}\label{prb:}
Let $i \in \{1, \dotsc, n\}$, $\bar C>0$, and $\delta>0$ be arbitrary.
Find nonnegative, and symmetric matrices $\Delta_1$,
$\dotsc$, $\Delta_L \in \mathbb{R}^{n\times n}$ that minimizes $$(\max_{j\neq i}v'_j) + \delta
\sum_{j=1}^n v'_j$$ while satisfying  \eqref{eq:Deltaell.leq...}
and $C(\Delta) \leq \bar C$.
\end{problem}

We remark that the second term in the objective function is introduced
for the purpose of regularization. We can therefore take the
parameter~$\delta$ to be very small so that the main part of the
objective function effectively equals $\max_{j\neq i}v'_j$.

The next proposition shows that Problem~\ref{prb:} can be reduced to a
geometric program under a certain assumption on the cost functions:

\begin{theorem}\label{thm:preliminaryProblem}
If $C(\bar \Delta)\geq \bar C$, then the solution of
Problem~\ref{prb:} is given by the solution of the optimization
problem
\begin{subequations}\label{eq:KatzOptimization}
\begin{align}
\minimize_{\lambda_\ell,\,\bar v,\,\Delta_\ell}\ \ &(\max_{j\neq i}\bar v_j) + \delta \sum_{j=1}^n \bar v_j
\\
\subjectto \ \ 
& \eqref{eq:Deltaell.leq...} \text{ and }\eqref{eq:smallInequ2}, 
\\
&\zeta_k \onev_n + \alpha \sum_{\ell=1}^{L}p_{\ell k} (A_\ell+\Delta_\ell) \lambda_\ell < \lambda_k,\label{eq:mainConstraint}
\\
&C(\Delta)\geq \bar C.\label{eq:KatzOptimization:cost}
\end{align}
\end{subequations}
Moreover, if there exists a nonnegative and symmetric matrix~$\ubar
A_\ell \leq A_\ell$ for each $\ell \in [L]$ such that $-C(\Delta)$ is
posynomial in the entries of the matrices $\ubar A_\ell + \Delta_\ell$
($\ell=1, \dotsc, L$),  then the optimization problem
\eqref{eq:KatzOptimization} is a geometric program.
\end{theorem}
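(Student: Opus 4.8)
The plan is to prove the claim in the two stages that mirror its two sentences: first that Problem~\ref{prb:} and the program~\eqref{eq:KatzOptimization} have the same minimizer, and then that~\eqref{eq:KatzOptimization} is a geometric program under the stated posynomiality hypothesis. The engine for the first stage is Proposition~\ref{prop:Katz analysis}: it lets me replace the implicit map $\Delta \mapsto v'$ (which solves a linear system for the perturbed network) by an explicit family of inequalities in certificate vectors $\lambda_1,\dots,\lambda_L$ together with an upper bound $\bar v$, so that the whole problem becomes a joint optimization over $(\lambda_\ell,\bar v,\Delta_\ell)$.

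Concretely, for a fixed admissible $\Delta$ the perturbed network $\mathcal G'$ has layer adjacency matrices $A_\ell+\Delta_\ell$ and lifted matrix $\mathcal A' = (P^\top\otimes I_n)\bigoplus_{\ell=1}^L (A_\ell+\Delta_\ell)$, so applying Proposition~\ref{prop:Katz analysis} to $\mathcal G'$ gives $v' < \bar v$ if and only if there are positive $\lambda_\ell$ satisfying~\eqref{eq:smallInequ2} and the inequality obtained from~\eqref{eq:smallInequ1} by the substitution $A_\ell \mapsto A_\ell + \Delta_\ell$, which is exactly~\eqref{eq:mainConstraint}. Since the objective $(\max_{j\neq i}\bar v_j)+\delta\sum_j \bar v_j$ is componentwise strictly increasing in $\bar v$, its infimum over the feasible $\bar v$ for fixed $\Delta$ is reached in the limit $\bar v \downarrow v'$ and equals $(\max_{j\neq i}v'_j)+\delta\sum_j v'_j$; hence optimizing jointly over $(\lambda_\ell,\bar v,\Delta_\ell)$ reproduces the objective of Problem~\ref{prb:}. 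The budget constraint transfers to~\eqref{eq:KatzOptimization:cost}, and because the objective increases monotonically as edges are strengthened, minimizing it under $C(\Delta)\geq\bar C$ drives the optimum onto the surface $C(\Delta)=\bar C$; the hypothesis $C(\bar\Delta)\geq\bar C$ guarantees (via $C(0)=0\leq\bar C\leq C(\bar\Delta)$ and continuity) that this surface is reachable inside the box $\Delta_\ell\leq\bar\Delta_\ell$. Two points need care: the strict inequalities mean the infimum over $\bar v$ is not attained, so I would establish equality of the two optimal values through a closure/continuity argument; and Proposition~\ref{prop:Katz analysis} presupposes $\alpha<1/\rho(\mathcal A')$, but this is automatic here, since feasibility of~\eqref{eq:mainConstraint} with a positive $\lambda$ already certifies that $\alpha\mathcal A'-I_{nL}$ is Hurwitz, so the unstable choices of $\Delta$ are silently excluded from the feasible set.

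For the second stage I would take the positive decision variables of the geometric program to be the entries $(\ubar A_\ell+\Delta_\ell)_{pq}$, the entries of $\lambda_\ell$ and $\bar v$, and an epigraph variable $t$ for the maximum, and then check each ingredient against the template~\eqref{eq:generalGP}. The objective $t+\delta\sum_j\bar v_j$ is a posynomial once $\max_{j\neq i}\bar v_j$ is replaced by $t$ with the monomial constraints $\bar v_j/t\leq 1$ for $j\neq i$; writing $A_\ell+\Delta_\ell = (\ubar A_\ell+\Delta_\ell) + (A_\ell-\ubar A_\ell)$ as a variable plus a nonnegative constant makes every entry a posynomial in the decision variables, so dividing~\eqref{eq:mainConstraint} through by $\lambda_k$ yields the required posynomial $\leq 1$ constraints; inequality~\eqref{eq:smallInequ2} and the box and symmetry conditions collapsed into~\eqref{eq:Deltaell.leq...} become posynomial- and monomial-type constraints in the shifted variables; and the cost constraint~\eqref{eq:KatzOptimization:cost} is brought into posynomial form precisely by invoking the hypothesis that $-C(\Delta)$ is posynomial in the entries of $\ubar A_\ell+\Delta_\ell$.

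The hard part will be the cost constraint~\eqref{eq:KatzOptimization:cost}. All other constraints convert essentially mechanically once the change of variables $u_\ell = \ubar A_\ell+\Delta_\ell$ is in place, since they are ratios of posynomials to monomials, and the role of the auxiliary lower bound $\ubar A_\ell$ is only to keep these variables strictly positive while absorbing the constant part $A_\ell-\ubar A_\ell\geq 0$ into a posynomial. The budget term $C(\Delta)$, by contrast, is an arbitrary sum of the increasing cost functions $f_{\ell,ij}$ and is not posynomial in general; the only leverage is the structural assumption on $-C$, and the delicate step is to confirm that, with the prescribed sign and constraint direction, this assumption genuinely casts~\eqref{eq:KatzOptimization:cost} as an admissible posynomial inequality. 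I expect this reduction, together with the closure argument for the strict inequalities and the reconciliation of the budget direction between Problem~\ref{prb:} and~\eqref{eq:KatzOptimization:cost} in the first stage, to be the places where the proof must be written carefully; the remaining bookkeeping is routine.
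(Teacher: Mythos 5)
Your outline is correct in its skeleton and matches the paper's strategy: the equivalence between Problem~\ref{prb:} and \eqref{eq:KatzOptimization} rests on Proposition~\ref{prop:Katz analysis} applied to the perturbed layers (which is exactly why \eqref{eq:mainConstraint} is \eqref{eq:smallInequ1} with $A_\ell$ replaced by $A_\ell+\Delta_\ell$), and the only substantive issue is why the reversed budget constraint \eqref{eq:KatzOptimization:cost} recovers the requirement $C(\Delta)\leq\bar C$ of Problem~\ref{prb:}. Where you diverge from the paper is in how that step is executed. You argue at the level of the true centralities: $\Delta\mapsto v'$ is entrywise nondecreasing (Neumann series), so minimization ``drives the optimum onto the surface'' $C(\Delta)=\bar C$. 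The paper argues by contradiction at the level of the LP certificates: if $C(\Delta^\star)>\bar C$, pick $\ell_0$ with $\Delta^\star_{\ell_0}\neq O_n$, set $\Delta'_{\ell_0}=(1-\epsilon_0)\Delta^\star_{\ell_0}$ (continuity of $C$ keeps $C(\Delta')\geq\bar C$ for small $\epsilon_0$), and shift the certificates downward, $\lambda'_k=\lambda^\star_k-\epsilon_0\alpha p_{\ell_0 k}\Delta^\star_{\ell_0}\lambda^\star_{\ell_0}$ and $\bar v'=\bar v^\star-\epsilon_0\alpha\sum_k p_{\ell_0 k}\Delta^\star_{\ell_0}\lambda^\star_{\ell_0}$; feasibility of \eqref{eq:mainConstraint} and \eqref{eq:smallInequ2} is preserved, and the regularization term $\delta\sum_j\bar v_j$ \emph{strictly} decreases because $\Delta^\star_{\ell_0}\lambda^\star_{\ell_0}$ is nonzero and nonnegative, contradicting optimality. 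This difference matters: entrywise monotonicity of $v'$ is only \emph{weak} monotonicity, so your argument by itself does not exclude optimizers with $C(\Delta^\star)>\bar C$ in which the excess weight sits where it cannot influence $v'$ (e.g., in a layer that the chain never reaches from $\zeta_0$); the paper's certificate perturbation manufactures a strict objective decrease regardless, and this is precisely the role of the $\delta$-term that Problem~\ref{prb:} builds into the objective. If you keep your route, you must upgrade ``drives onto the surface'' to a strict-decrease statement, which essentially amounts to redoing the paper's perturbation.

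Two items in your proposal go beyond what the paper writes and are worth keeping. First, your observation that feasibility of \eqref{eq:mainConstraint} with positive $\lambda_\ell$ self-certifies $\rho(\alpha\mathcal A')<1$ (a positive subinvariant vector of a nonnegative matrix bounds its spectral radius below one) is correct and closes a point the paper leaves implicit, namely that Proposition~\ref{prop:Katz analysis} is applicable to every feasible $\Delta$. Second, your explicit GP reduction (epigraph variable for the max, shifted variables $\ubar A_\ell+\Delta_\ell$, division of each row of \eqref{eq:mainConstraint} by the corresponding entry of $\lambda_k$) supplies the verification that the paper dismisses as ``easy to see,'' and your worry about \eqref{eq:KatzOptimization:cost} is well placed: read literally, ``$-C(\Delta)$ is posynomial'' is impossible since posynomials are positive, and the constraint becomes a genuine GP constraint only after absorbing additive constants, exactly as the paper's own numerical example does with $f_{\ell,ij}(\Delta_{\ell,ij})=2\bigl(1-(\Delta_{\ell,ij}+1)^{-1}\bigr)$.
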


Before giving the proof for Theorem~\ref{thm:preliminaryProblem}, we
remark that the following straightforward formulation is not
appropriate in the current problem setting:
\begin{equation}\label{eq:KatzOptimization:doesn't work}
\begin{aligned}
\minimize_{\bar v,\,\lambda_\ell,\,\Delta_\ell}\ \ &(\max_{j\neq i}\bar v_j) + \delta \sum_{j=1}^n \bar v_j
\\
\subjectto \ \ 
&
\text{\eqref{eq:Deltaell.leq...}, \eqref{eq:smallInequ2}, and \eqref{eq:mainConstraint}}, 
\\
&C(\Delta)\leq \bar C.
\end{aligned}
\end{equation}
The only difference between this optimization problem and
\eqref{eq:KatzOptimization} is in the constraints for the cost
function. We can trivially see that the optimization
problem~\eqref{eq:KatzOptimization:doesn't work} yields the solutions
$\Delta^\star_1 = \cdots = \Delta^\star_L = 0$, which are meaningless
for solving Problem~\ref{prb:}.

We now give the proof of Theorem~\ref{thm:preliminaryProblem}

\begin{proofof}{Theorem~\ref{thm:preliminaryProblem}}
It is easy to see that the optimization
problem~\eqref{eq:KatzOptimization} is a geometric program under the
assumptions stated in the proposition. We shall prove the former claim
of the theorem. We remark that the existence of the optimal solution
is guaranteed by the assumption $C(\bar \Delta)\geq \bar C$.
Therefore, we need to show that the optimal solution, denoted by
$\Delta^\star$, satisfies $C(\Delta^\star) \leq \bar C$, as required
in Problem~\ref{prb:}.

Assume the contrary, i.e., $C( \Delta^\star) > \bar C$. Since
$C(\Delta^\star) > 0$ and $C$ is strictly increasing by
Assumption~\ref{asm:increasing}, there exists $\ell_0 \in [L]$ such
that $\Delta_{\ell_0} \neq O_n$. Define $\Delta'_\ell =
\Delta_\ell^\star$ for every $\ell \neq \ell_0$ and $\Delta'_{\ell_0}
= (1-\epsilon) \Delta_{\ell_0}^\star$ for a constant~$\epsilon>0$. By
the continuity of~$C$, there exists $\epsilon_0>0$ such that
$C(\Delta') \geq \bar C$. Define $\lambda'_\ell = \lambda^\star_\ell -
\epsilon_0\alpha p_{\ell_0
\ell}\Delta_{\ell_0}^\star\lambda^\star_\ell$ for each $\ell \in [L]$
and $\bar v' = \bar v^\star - \epsilon_0 \alpha \sum_{k=1}^L p_{\ell
k}\Delta^\star_{\ell_0} \lambda^\star_{\ell_0}$. Notice that, by
taking a sufficiently small $\epsilon_0$, we can guarantee that the
vectors $\lambda'_\ell$ and $\bar v'$ are positive. Then, we can
easily see that the triple $(\bar v', \lambda_\ell',
\Delta'_\ell)_{\ell \in [L]}$ satisfies the constraints in the
optimization problem~\eqref{eq:KatzOptimization}. Moreover, $\bar v'$
achieves a smaller value of the objective function than $\bar v^\star$
does. This however contradicts to the optimality of $\bar v^\star$.
\end{proofof}

Although Theorem~\ref{thm:preliminaryProblem} allows us to find the
optimal investment for suppressing the Katz centralities of all the
nodes except the target node~$i$ with a prescribed budget~$\bar C$,
the theorem does not directly allow us to find the minimum cost~$\bar
C$ for solving Problem~\ref{prb:optimalCost}. We therefore propose
Algorithm~\ref{alg:eigOptimization} where we use a binary search for
finding the minimum $\bar C$ achieving the
constraint~\eqref{eq:rank
constraints}. The effectiveness of Algorithm~\ref{alg:eigOptimization}
is illustrated in Section~\ref{sec:numerical} with numerical
simulations.

\begin{algorithm}[tb]                     
\caption{A procedure for solving Problem~\ref{prb:optimalCost}}         
\begin{algorithmic}                  
\REQUIRE $C_{\min}=0$, $C_{\max} = C(\bar \Delta)$, and $\delta C > 0$
\WHILE{$C_{\max}-C_{\min} > \delta C$}
\STATE $\bar C \leftarrow (C_{\max}+C_{\min})/2$
\STATE Solve the optimization problem~\eqref{eq:KatzOptimization}
\STATE $v \leftarrow v(\mathcal G')$
\IF{\eqref{eq:rank constraints} holds}
\STATE $C_{\max} = \bar C$
\ELSE
\STATE $C_{\min} = \bar C$
\ENDIF
\ENDWHILE
\end{algorithmic}
\label{alg:eigOptimization}
\end{algorithm}

\subsection{Decreasing Weights} 

This subsection briefly discusses the case where we can
\emph{decrease} the weights of edges by paying costs. We specifically
consider the following situation. Let $\mathcal G$ be a Markovian
temporal network. For each $\ell \in [L]$ and $\{i, j\} \in \mathcal
E_\ell$, we let $f_{\ell, ij}:[0, \infty) \to [0, \infty)$ be a
function satisfying Assumption~\ref{asm:increasing}. For a nonnegative
number~$\Delta_{\ell, ij}$, the quantity $f_{\ell, ij}(\Delta_{\ell,
ij})$ now represents the cost for \emph{decreasing} the weight of the
edge~$\{i, j\}$ in the layer~$\mathcal G_\ell$ from $a_{\ell, ij}$ to
$a_{\ell, ij} - \Delta_{\ell, ij}$. As is already done, we denote the
modified layers by $\mathcal G'_1$, $\dotsc$, $\mathcal G'_L$, and
denote the resulting Markovian temporal network by $\mathcal G'$. We
then define the total cost $C$ by \eqref{eq:def:C}.

In this dual problem setting, we can still prove the following theorem
that is an analogue of Theorem~\ref{thm:preliminaryProblem}:

\begin{theorem}\label{thm:preliminaryProblemDecreasing}
If $C(\bar \Delta) \geq \bar C$, then the solution of
Problem~\ref{prb:} is given by the solution of the optimization
problem~\eqref{eq:KatzOptimization:doesn't work}. Moreover, if there
exists a nonnegative and symmetric matrix $\ubar A_\ell \leq A_\ell$ for
each $\ell \in [L]$ such that $C(\Delta)$ is a posynomial in the
entries of $\Delta_\ell -\ubar A_\ell$ ($\ell=1, \dotsc, L$), then the
optimization problem~\eqref{eq:KatzOptimization:doesn't work} is a
geometric program.
\end{theorem}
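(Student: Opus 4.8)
The plan is to follow the template of the proof of Theorem~\ref{thm:preliminaryProblem}, carrying out the systematic replacement $A_\ell+\Delta_\ell \mapsto A_\ell-\Delta_\ell$ throughout: the resulting adjacency matrix of the $\ell$th layer is now $A_\ell-\Delta_\ell$, so I read the constraint~\eqref{eq:mainConstraint} (and the lifted matrix of~\eqref{eq:def:mathcalA}) with $A_\ell-\Delta_\ell$ in place of $A_\ell+\Delta_\ell$. The former claim is the easy half. For any feasible $\Delta$ one has $0\leq A_\ell-\Delta_\ell\leq A_\ell$ entrywise, so the lifted matrix $\mathcal A'$ formed from the modified layers satisfies $\mathcal A'\leq\mathcal A$ and hence $\rho(\mathcal A')\leq\rho(\mathcal A)$; the standing assumption $\alpha<1/\rho(\mathcal A)$ therefore already gives $\alpha<1/\rho(\mathcal A')$, and Proposition~\ref{prop:Katz analysis} applies to $\mathcal G'$. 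It shows that, for fixed $\Delta$, the constraints~\eqref{eq:smallInequ2} and~\eqref{eq:mainConstraint} are solvable in positive $\lambda_1,\dotsc,\lambda_L$ exactly when $v'<\bar v$, so minimizing over the auxiliary variables drives $\bar v$ down to the true centrality $v'$ of $\mathcal G'$. Since the $\Delta$-feasible set of~\eqref{eq:KatzOptimization:doesn't work} coincides with that of Problem~\ref{prb:} (both impose only~\eqref{eq:Deltaell.leq...} and $C(\Delta)\leq\bar C$), the two problems have the same minimizing $\Delta$ and the same optimal value.

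The reason~\eqref{eq:KatzOptimization:doesn't work}, the formulation that was degenerate in the increasing case, is now the correct one is that the monotonicity has reversed: shrinking the weights lowers every entry of $v'$ (clear from the power-series form~\eqref{eq:def:k(mathcalG)} with nonnegative $A(k)$), so spending budget \emph{improves} the objective instead of degrading it. To make this explicit and, as in Theorem~\ref{thm:preliminaryProblem}, to locate the optimum on the budget boundary, I would run the perturbation argument in reverse. If an optimal $\Delta^\star$ had $C(\Delta^\star)<\bar C$, then because $C(\bar\Delta)\geq\bar C>C(\Delta^\star)$ and $C$ is strictly increasing we must have $\Delta^\star\neq\bar\Delta$, so some entry of some $\Delta^\star_{\ell_0}$ lies strictly below its bound $\bar\Delta_{\ell_0}$; nudging that entry upward keeps $\Delta$ feasible and, by continuity of $C$, within budget, further decreases $A_{\ell_0}-\Delta_{\ell_0}$, hence lowers $v'$, and thus strictly decreases the objective---contradicting optimality. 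This is exactly where the hypothesis $C(\bar\Delta)\geq\bar C$ enters, and it yields $C(\Delta^\star)=\bar C$, so the optimum is a genuine, budget-saturating solution of Problem~\ref{prb:}.

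The geometric-program claim is where the real obstacle lies. Unlike $A_\ell+\Delta_\ell$, the matrix $A_\ell-\Delta_\ell$ is \emph{not} posynomial in the entries of $\Delta_\ell$ because of the subtraction, so~\eqref{eq:mainConstraint} cannot be made posynomial in the variables $\Delta_\ell$ directly. My plan is therefore to change variables and treat the entries of the \emph{resulting} matrices $A_\ell-\Delta_\ell$ as the positive optimization unknowns; the floor $\ubar{A}_\ell\leq A_\ell$ is precisely what keeps these unknowns bounded away from zero, as a geometric program requires. In these variables each bilinear term $p_{\ell k}(A_\ell-\Delta_\ell)\lambda_\ell$ in~\eqref{eq:mainConstraint} is a sum of monomials, so~\eqref{eq:mainConstraint} and~\eqref{eq:smallInequ2} become posynomial-over-monomial (hence posynomial $\leq 1$) inequalities, the bounds $0\leq\Delta_\ell\leq\bar\Delta_\ell$ become monomial constraints, and the maximum in the objective is absorbed by the usual epigraph substitution. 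In contrast with the increasing case, the budget now enters as the upper bound $C(\Delta)\leq\bar C$, which is a plain posynomial constraint as soon as $C$ admits a posynomial form in the chosen variables---which the theorem formulates as the condition that $C$ is posynomial in the entries of $\Delta_\ell-\ubar{A}_\ell$. Every constraint and the objective are then posynomial, so~\eqref{eq:KatzOptimization:doesn't work} is a geometric program. The point I would verify most carefully is the passage from the strict inequalities in~\eqref{eq:smallInequ2}--\eqref{eq:mainConstraint} to the non-strict constraints of a geometric program, handled by passing to the infimum exactly as in Proposition~\ref{prop:Katz analysis}.
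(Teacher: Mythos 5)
The paper itself omits this proof (deferring to ``almost the same as Theorem~\ref{thm:preliminaryProblem}''), so the comparison is against that intended adaptation. Your first two paragraphs carry it out correctly and even improve on the paper in one respect: the equivalence of Problem~\ref{prb:} with \eqref{eq:KatzOptimization:doesn't work} via Proposition~\ref{prop:Katz analysis} (same $\Delta$-feasible set, auxiliary variables $\lambda_\ell,\bar v$ squeezed down to $v'$), and the reversed perturbation argument showing the budget is saturated, are exactly the mirror of the paper's argument for Theorem~\ref{thm:preliminaryProblem}. Your observation that $0\leq A_\ell-\Delta_\ell\leq A_\ell$ forces $\rho(\mathcal A')\leq\rho(\mathcal A)$, so that Proposition~\ref{prop:Katz analysis} automatically applies to $\mathcal G'$, is a point the paper never addresses (and which is genuinely problematic in the increasing-weight case). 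The strictness caveats you flag (strict vs.\ non-strict inequalities, strict decrease of $v'$ under a nudge when a layer may be unreachable) are present in the paper's own proof of Theorem~\ref{thm:preliminaryProblem} as well, so they do not distinguish your argument from the intended one.

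The genuine gap is in the geometric-program claim, and it is a sign/variable mismatch that your write-up papers over. You choose the GP variables to be the entries of $z_\ell = A_\ell-\Delta_\ell$, which indeed makes \eqref{eq:mainConstraint} (with $A_\ell-\Delta_\ell$) and the box constraints posynomial/monomial; but you then assert that the cost constraint $C(\Delta)\leq\bar C$ becomes posynomial because the hypothesis says $C$ is posynomial in the entries of $\Delta_\ell-\ubar A_\ell$. Those are \emph{not} the same variables: $\Delta_\ell-\ubar A_\ell = (A_\ell-\ubar A_\ell)-z_\ell$, so a posynomial in $\Delta_\ell-\ubar A_\ell$ becomes, in your variables, a sum of terms of the form $\bigl((A_\ell-\ubar A_\ell)_{ij}-z_{\ell,ij}\bigr)^{a}$, which is not a posynomial in $z_\ell$, and an upper bound on it is not a GP constraint. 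Conversely, if you work in the variables $w_\ell=\Delta_\ell-\ubar A_\ell$ of the hypothesis, then $A_\ell-\Delta_\ell = (A_\ell-\ubar A_\ell)-w_\ell$ makes \eqref{eq:mainConstraint} signomial rather than posynomial. The consistent reading—mirroring Theorem~\ref{thm:preliminaryProblem}, where the variables are $\ubar A_\ell+\Delta_\ell$ and $A_\ell+\Delta_\ell=(A_\ell-\ubar A_\ell)+(\ubar A_\ell+\Delta_\ell)$—is to take the GP variables to be the entries of $w_\ell = \ubar A_\ell-\Delta_\ell$ (so positivity of the variables gives $\Delta_\ell<\ubar A_\ell$, which is also what actually keeps $A_\ell-\Delta_\ell\geq A_\ell-\ubar A_\ell\geq 0$, a fact you instead attributed, without justification, to the mere existence of the floor): then $A_\ell-\Delta_\ell=(A_\ell-\ubar A_\ell)+w_\ell$ is posynomial, the box constraints are monomial, and the cost condition must be read as posynomiality of $C$ in $\ubar A_\ell-\Delta_\ell$. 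Your proof needs to either adopt this change of variables explicitly or note that the theorem's hypothesis is to be interpreted with that sign; as written, the GP verification does not go through in any single set of positive variables.
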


The proof of this theorem is almost the same as the proof of
Theorem~\ref{thm:preliminaryProblem} and hence is omitted. Using
Theorem~\ref{thm:preliminaryProblemDecreasing}, we can construct an
algorithm, similar to Algorithm~\ref{alg:eigOptimization}, for solving
Problem~\ref{prb:optimalCost} in the case where we can decrease the
weights of edges. We omit the details due to limitations of the space.

\newcommand{\myfigwidth}{0.84\linewidth}

\begin{figure}[tb]
\centering
\includegraphics[width=\myfigwidth]{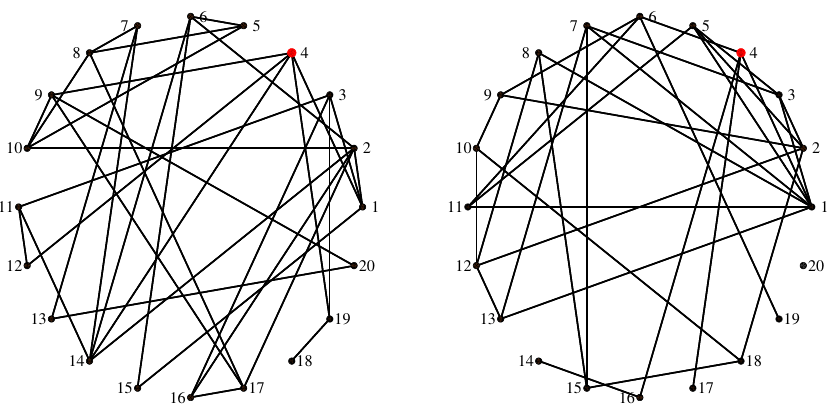}
\caption{Original graphs: $\mathcal G_1$ (left) and $\mathcal G_2$ (right)}
\label{fig:A1A2}
\vspace{.5 cm}
\centering
\includegraphics[width=\myfigwidth]{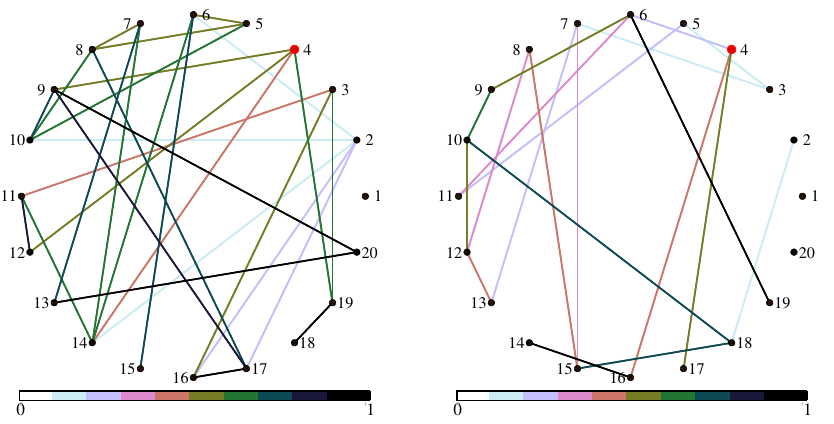}
\caption{The optimal additional weights: $\Delta_1^\star$ (left) and $\Delta_2^\star$ (right)}
\label{fig:Delta1Delta2}
\end{figure}

\begin{figure}[tb]
\centering
\includegraphics[width=\myfigwidth]{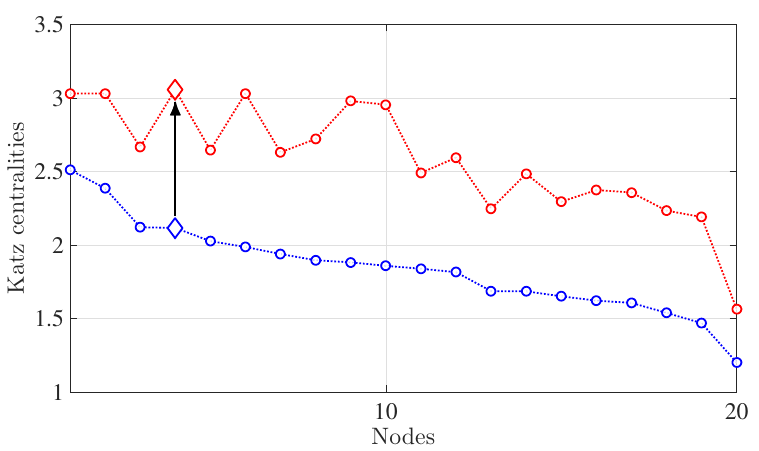}
\caption{Katz centralities: Blue: before optimization. Red: after optimization. Diamonds indicate the target node $i=4$.}
\label{fig:centralities}
\end{figure}

\section{Numerical Simulations}\label{sec:numerical}

We illustrate the results obtained in this paper with numerical
simulations. Let $\mathcal G_1$ and $\mathcal G_2$ be two independent
realizations of Erd\"os-R\'enyi graphs with $n=20$ nodes and edge
probability~$p=0.15$ (shown in Fig.~\ref{fig:A1A2}). We assume that
the weights of the edges in both the graphs are all one. We consider
the Markovian temporal network $\mathcal G$ taking values in the set
$\{\mathcal G_1, \mathcal G_2\}$ and having the transition probability
matrix
\begin{equation*}
P = \begin{bmatrix}
0.5419 & 0.4581 \\ 0.1914& 0.8086
\end{bmatrix}. 
\end{equation*}
We fix $\alpha = 0.1321 = 1/(2\rho(\mathcal A))$ and $\zeta_0 =
\onev_2/2$. We sort the nodes in $\mathcal G$ in the decreasing order
of its Katz centrality.

For all $\ell \in \{1, 2\}$ and $\{i, j\} \in \mathcal{E}_\ell$, we
use the increasing cost function $f_{\ell, ij}(\Delta_{\ell, ij}) =
2(1 - (\Delta_{ell, ij}+1)^{-1})$ defined over $[0, 1]$. Let $\bar
\Delta_\ell =  A_\ell$ and $\ubar A_\ell = A_\ell$ for all $\ell$.
Notice that the function~$-f_{\ell, ij}$ is a posynomial in
$\Delta_{\ell, ij}+1 = \Delta_{\ell, ij}+\bar A_{\ell, ij}$ and
therefore satisfies the assumption of
Theorem~\ref{thm:preliminaryProblem}. We let $i$ be the 4th important
node, in terms of the Katz centrality of the original temporal
network~$\mathcal G$. We apply Algorithm~\ref{alg:eigOptimization} and
obtain the optimal additional weights $\Delta_1^\star$ and
$\Delta_2^\star$ with the cost $C = 31.2$. We show the weighted graphs
having the adjacency matrices~$\Delta_1$ and~$\Delta_2$ in
Fig.~\ref{fig:Delta1Delta2}. The Katz centralities of the original and
optimized temporal networks are shown in Fig.~\ref{fig:centralities}.

For another but a larger-scale example, we again consider the
Markovian temporal network whose layers are realizations of the
Erd\"os-R\'enyi graphs. Here we choose the parameters $n=50$, $p =
0.05$, and
\begin{equation*}
P = \begin{bmatrix}
0.1503 & 0.8497 \\ 0.4381& 0.5619
\end{bmatrix}. 
\end{equation*}
We fix $\alpha = 0.1908 = 1/(2\rho(\mathcal A))$ and $\zeta_0 =
\onev_2/2$. We again sort the nodes in $\mathcal G$ in the decreasing
order of its Katz centrality. We set the target node to be $i = 8$,
and use Algorithm~\ref{alg:eigOptimization} to obtain the optimal
additional weights $\Delta_1^\star$ and $\Delta_2^\star$ (illustrated
in Fig.~\ref{fig:Delta1Delta2_50}) with the cost $C = 43.05$. The Katz
centralities of the original and optimized temporal networks are shown
in Fig.~\ref{fig:centralities50}.

\begin{figure}[tb]
\centering
\includegraphics[width=\myfigwidth]{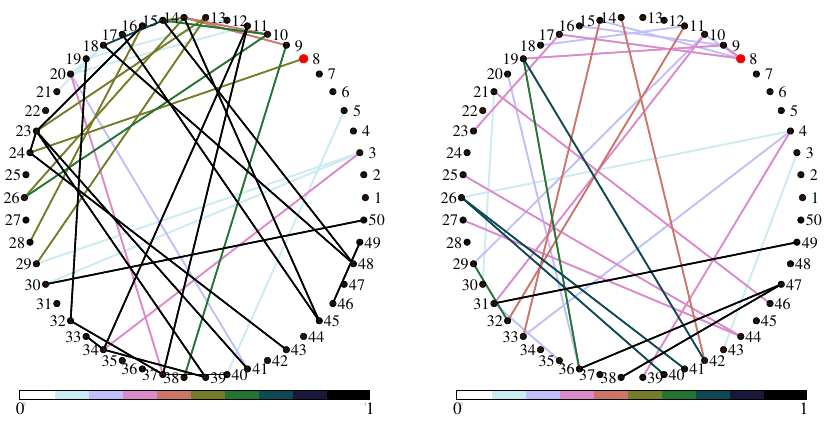}
\caption{The optimal additional weights: $\Delta_1^\star$ (left) and $\Delta_2^\star$ (right)}
\label{fig:Delta1Delta2_50}
\end{figure}

\begin{figure}[tb]
\centering
\includegraphics[width=\myfigwidth]{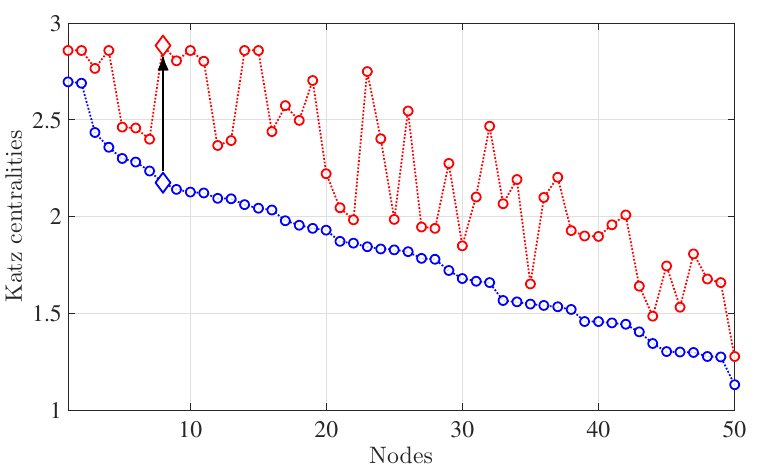}
\caption{Katz centralities: Blue: before optimization. Red: after optimization. Diamonds indicate the target node $i=8$.}
\label{fig:centralities50}
\end{figure}

\section{Conclusion}

In this paper, we have introduced an extension of Katz centrality for
Markovian temporal networks. The definition is based on the solution
of a Markov jump linear system, which is consistent with the standard
definition of Katz centrality for static networks. We have first shown
that the Katz centrality of a Markovian temporal network is the
solution of a linear program, which enables us to efficiently find the
centrality in the case of large network sizes. We have then presented
a convex optimization-based approach for controlling the Katz
centrality by tuning the weights of the temporal network in the most
cost-efficient manner. Numerical simulations have been given to
illustrate the effectiveness of the obtained results.

\end{document}